\theoremstyle{plain}
\newtheorem{thm}{Theorem}[section]
\newtheorem{theorem}[thm]{Theorem}
\newtheorem{cor}[thm]{Corollary}
\newtheorem{lem}[thm]{Lemma}
\newtheorem{prop}[thm]{Proposition}
\theoremstyle{remark}
\newtheorem{remark}[thm]{Remark}
\theoremstyle{definition}
\newtheorem{definition}[thm]{Definition}
\newcounter{mnotecount}[section]
\newcommand{\definedas}{\mathrel{\raise.095ex\hbox{\rm :}\mkern-5.2mu=}}
\def\epsilon{{\varepsilon}}
\def\phi{{\varphi}}
\newcommand{\bR}{\mathbb{R}}
\newcommand{\bS}{\mathbb{S}}
\newcommand{\cN}{\mathcal{N}}
\newcommand{\cL}{\mathcal{L}}
\newcommand{\gbar}{\overline{g}}
\renewcommand{\hbar}{\overline{h}}
\DeclareMathOperator{\divg}{div}
\newcommand{\ric}{\mathrm{Ric}}
\DeclareMathOperator{\hess}{Hess}
\begin{document} 

\title{\sc{On complete stationary vacuum initial data}}
\author{Julien Cortier}
\address{ETH Z\"urich\\ 
Forschungsinstitut f\"ur Mathematik\\ 
R\"amistrasse 101\\ 
8092 Zurich, Switzerland.} \email{jcortier@ihes.fr}
\author{Vincent Minerbe}
\address{Sorbonne Universit\'es, UPMC Univ Paris 06\\ 
UMR 7586, IMJ-PRG\\ 
4 place Jussieu, F-75005, Paris, France. } \email{minerbe@math.jussieu.fr}

\begin{abstract}
We describe a proof of M.T. Anderson's result \cite{And00a} on 
the rigidity of complete stationary initial data 
for the Einstein vacuum equations in spacetime dimension $3+1$, under an extra assumption 
on the norm of the stationary Killing vector field. The argument only involves
basic comparison geometry along with some Bochner-Weitzenb\"ock formula techniques.
We also discuss on the possibility to extend those techniques in higher dimensions.
\end{abstract}

\maketitle


\section{Introduction}

In General Relativity, it is a natural task to try and classify  spacetime solutions of 
the Einstein equations under geometric requirements. Many basic questions are still wide open, 
even in the case of the \emph{vacuum Einstein equations} where the Ricci curvature tensor
of the spacetime metric vanishes.
However, significant progress has been done in particular cases, typically in presence of isometries. 
Among the simplest examples comes the study of \emph{spherically symmetric}, Ricci-flat spacetimes
in dimension $3+1$. The Birkhoff theorem asserts that such spacetimes are locally 
isometric to one of the maximally extended Schwarzschild spacetimes\cite{HE73}.

In this note, we restrict our attention to the class of spacetimes that are invariant 
under isometries in the time-direction. More precisely, we are interested here in spacetimes $(\cN,\gamma)$, 
solutions of the Einstein equations, in the special case of vanishing energy-momentum tensor and 
cosmological constant (hence Ricci-flat), which admit a timelike Killing vector field $\xi$. 
In order to avoid pathologies, we moreover assume that the orbits of this vector field are 
diffeomorphic to $\bR$, and that no closed timelike curves occur in the spacetime.
Such spacetimes are called \emph{stationary}; they are of considerable interest in 
General Relativity since they are expected to describe the final state of 
the gravitational collapse of a star into a black hole. We refer the interested reader to
\cite{Heu98} for a survey on stationary spacetimes.

A simple, but fundamental class of such spacetimes is the class of
\emph{static} spacetimes. 
These are stationary spacetimes $(\cN,\gamma)$ 
such that the orthogonal distribution with respect to the Killing vector field
$\xi$ is integrable.
An equivalent formulation is to say that $(\cN,\gamma)$ takes the form of a warped product
$$
\bR \times _u M := (\bR \times M\ ,\ - u^2 dt^2 + g )\;,
$$
where $M$ is a spacelike hypersurface of $\cN$ 
whose induced metric is the Riemannian metric $g$ and $u$ 
is a smooth, positive function on $M$. 
The fact that $\bR \times _u M$ is a Ricci-flat spacetime is equivalent to the fact
that the data $(M,g,u)$ satisfies the following conditions:
\begin{equation}\label{Static}
\begin{array}{rcl}
   \hess_g u & = & u g\\
   \Delta_g u & = & 0\;.
  \end{array}
\end{equation}
One also says that this static spacetime is \emph{vacuum}, which refers to the fact that
the energy momentum tensor of general relativity is chosen to be zero.

The problem of classifying \emph{static vacuum} spacetimes is therefore expressed as 
the problem of finding all positive solutions $(g,u)$ of the above system.
Fundamental examples of such static spacetimes are the Schwarzschild spacetimes. 
They have the expression
\[
g = \left(1 - \frac{2m}{r^{n-2}}\right)^{-1} dr^2 + r^2 \sigma _{\bS ^{n-1}}\ \ ,
\ \ u = \left(1 - \frac{2m}{r^{n-2}}\right)^{1/2}
\]
on the manifold $M = \left((2m)^{1/(n-2)},+\infty\right) \times \bS ^{n-1}$, 
where $m \in \bR$ is a parameter called the \emph{mass}. 
Some rigidity statements hold in spacetime dimension $n+1 = 4$.
For instance, Bunting and Masood-Ul-Alam were able to prove 
that Schwarzschild spacetimes are the only static vacuum ones which have the further 
property to be asymptotically flat \cite{BM87}.

In the more general setting of \emph{stationary vacuum} spacetimes, the classification
in dimension $3+1$ of the asymptotically flat ones and the uniqueness of \emph{Kerr}
spacetimes has been a major problem of mathematical relativity for the last decades. 
We will not develop further on this question and refer the reader to
\cite{CL08} and references therein.
In both cases, the spacetimes considered here may exhibit a black hole region and, as for the Schwarzschild and Kerr
examples, may fail to be geodesically complete.

Instead, we focus here on stationary vacuum spacetimes which are moreover complete.
The first rigidity result in this setting comes from Lichnerowicz \cite{Lic55}, 
under the further assumptions that the spacetimes considered are $3+1$ dimensional and asymptotically flat.
He obtains that only the Minkowski spacetime $\bR^{3,1}$ fulfills these properties (see also Einstein and Pauli \cite{EP43}).

Much more recently, Anderson \cite{And00a} proved the corresponding result 
without the asymptotic flatness assumption.

\begin{theorem}[Anderson, 2000]
 Let $(\cN,\gamma)$ be a 4-dimensional complete stationary vacuum spacetime. Then
 $(\cN,\gamma)$ is isometric to $(\bR \times M, -dt^2 + g)$, for some flat complete Riemannian manifold $(M,g)$.
\end{theorem}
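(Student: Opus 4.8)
\medskip
\noindent\textbf{Proof strategy.}

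The plan is to trade the Lorentzian problem for a harmonic-map problem on a $3$-manifold. First I would pass to the orbit space: since $\xi$ is timelike with orbits diffeomorphic to $\bR$ and no closed timelike curves occur, its flow defines a free $\bR$-action which, under the stated causality/properness hypotheses, realizes $M\definedas\cN/\bR$ as a smooth $3$-manifold; as the fiber $\bR$ is contractible the bundle is trivial, so $\cN\cong\bR\times M$ and
$$\gamma=-u^2\,(dt+\theta)^2+g,$$
with lapse $u=|\xi|_\gamma>0$, orbit-space metric $g$, and connection $1$-form $\theta$. In dimension $3+1$ the twist $1$-form $\omega$ of $\xi$ is closed in vacuum, so on the universal cover it admits a potential $\psi$ with $\omega=d\psi$. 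Setting $V\definedas u^2$ and $\Phi\definedas(V,\psi):(M,g)\to(\bH^2,g_{\bH^2})$, where $g_{\bH^2}=\tfrac{1}{2V^2}(dV^2+d\psi^2)$ is the hyperbolic metric, Ricci-flatness of $\gamma$ becomes equivalent to the pair of Ernst equations, namely that $\Phi$ is harmonic together with the Einstein constraint (in a convenient normalization)
$$\ric_g=\Phi^\ast g_{\bH^2}=\tfrac{1}{2V^2}\big(dV\otimes dV+d\psi\otimes d\psi\big).$$
In particular $\ric_g\ge 0$ and $\scal_g=|d\Phi|^2=2\,e(\Phi)\ge 0$, where $e(\Phi)$ is the energy density. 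In the hypersurface-orthogonal case $\psi$ is constant and one recovers the static system \eqref{Static}. All quantities used below are local and $\bR$-invariant, so working on the cover is harmless.

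The decisive difficulty, and the place where I expect the real work, is that geodesic completeness of $(\cN,\gamma)$ does \emph{not} immediately make $(M,g)$ complete nor bound the lapse: a priori $u$ might tend to $0$ or $+\infty$ along a curve of finite $g$-length, creating an incomplete end on which the harmonic-map estimates fail. This is exactly the point at which a two-sided bound on $|\xi|_\gamma$ would trivialize matters, and I want to avoid assuming it. My plan is to exclude such degenerations intrinsically: using the constraint $\ric_g=\Phi^\ast g_{\bH^2}\ge0$ together with Yau's gradient estimate for the harmonic map $\Phi$, I would show that an incomplete finite-length end of $(M,g)$ would force either an unbounded curvature concentration or an inextendible spacelike/null geodesic of finite affine length in $\cN$, contradicting geodesic completeness of the spacetime. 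The target of this step is twofold: completeness of the $3$-manifold $(M,g)$ with $\ric_g\ge0$, and sublinear growth of the image $\Phi(M)\subset\bH^2$.

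Granting this, the rest is a Bochner argument. The hard analytic input is the Eells--Sampson formula
$$\tfrac12\Delta e(\Phi)=|\nabla d\Phi|^2+\langle\ric_g,\Phi^\ast g_{\bH^2}\rangle-\sum_{i,j}\big\langle R^{\bH^2}(d\Phi(e_i),d\Phi(e_j))\,d\Phi(e_j),\,d\Phi(e_i)\big\rangle,$$
which I would exploit through two structural facts: the target $\bH^2$ is a Cartan--Hadamard surface with curvature $\kappa\le0$, so the last (target) term is nonnegative; and the constraint converts the Ricci term into $\langle\ric_g,\Phi^\ast g_{\bH^2}\rangle=|\ric_g|^2$. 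Since $|d\Phi|^4=\scal_g^2=(\tr\ric_g)^2\ge|\ric_g|^2$ for the positive semidefinite tensor $\ric_g$, this yields the refined inequality
$$\tfrac12\Delta e(\Phi)\ge|\nabla d\Phi|^2+|\ric_g|^2+|\kappa|\big(|d\Phi|^4-|\ric_g|^2\big)\ge 0,$$
so $e(\Phi)$ is subharmonic. To force $d\Phi\equiv0$ I would invoke a Liouville theorem for harmonic maps from a complete manifold of nonnegative Ricci curvature into a Cartan--Hadamard target (Cheng, Schoen--Yau), whose admissible hypothesis is precisely the sublinear growth secured in the previous step; alternatively, composing $\Phi$ with the convex function $\dist_{\bH^2}(\cdot,p_0)$ produces a subharmonic function on $M$ governed by the Omori--Yau maximum principle. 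Where a line is present I would call on the Cheeger--Gromoll splitting theorem: it peels off an $\bR$-factor, $\Phi$ descends, and an induction on dimension reduces the Liouville step to the lowest, parabolic, case.

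Finally, $\Phi$ constant means $u$ is constant and the twist potential $\psi$ is constant, so $\omega\equiv0$, the distribution orthogonal to $\xi$ is integrable, and $\xi$ is hypersurface-orthogonal; rescaling $t$ and absorbing the now-closed $\theta$ gives $\gamma=-dt^2+g$ on $\bR\times M$. The constraint degenerates to $\ric_g=0$, and because $\dim M=3$ the full Riemann tensor is algebraically determined by $\ric_g$, whence $(M,g)$ is flat, and complete by the second step. This is exactly the asserted isometry. I expect the second step to be the main obstacle, as it must rule out lapse degeneration and establish completeness and growth control without an a priori norm bound; by contrast the first step is a standard Kaluza--Klein/Ernst reduction, the third a Bochner-technique computation, and the last elementary $3$-dimensional linear algebra.
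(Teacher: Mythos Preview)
The paper does not supply its own proof of this statement: Anderson's theorem is quoted from \cite{And00a}, whose argument (as the introduction notes) rests on Cheeger--Fukaya--Gromov collapsing theory with refinements specific to dimension three. What the paper actually proves is the weaker Theorem~\ref{rigiditythm}, which takes completeness of the conformal metric $\gbar=u^{2}g$ as a hypothesis and then runs an elementary Bochner/cutoff argument. Your proposal should therefore be read against both that argument and the completeness gap separating Theorem~\ref{rigiditythm} from the full Anderson result.

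On the analytic side, your Ernst/harmonic-map packaging is essentially equivalent to the paper's computation, but there is a substantive slip: the constraint $\ric=\Phi^{\ast}g_{\bH^{2}}\ge 0$ holds for the conformal metric $\gbar=u^{2}g$ (this is exactly \eqref{Statbar}), not for the orbit-space metric $g$ you introduced via $\gamma=-u^{2}(dt+\theta)^{2}+g$; for that $g$ the Ricci tensor carries the indefinite term $-2u^{-4}|\omega|_{g}^{2}\,g$ (see \eqref{Stat}). Once $g$ is replaced by $\gbar$ throughout, your Eells--Sampson inequality and the paper's inequality~\eqref{lambda34} express the same subharmonicity of the energy density, and your appeal to a Cheng/Schoen--Yau Liouville theorem is a legitimate alternative to the paper's explicit cutoff-and-maximum-principle step.

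The genuine gap is your second step. Spacetime completeness does yield completeness of $(M,g)$, since horizontal lifts of $g$-geodesics are $\gamma$-geodesics (horizontality persists because $\gamma(\dot c,\xi)$ is conserved along geodesics of a Killing field), but that is the wrong metric: the nonnegative Ricci curvature, and hence the whole Bochner/Liouville machinery, lives on $(M,\gbar)$, and completeness of $\gbar$ amounts to a positive lower bound on $u$ that you have not established. Your proposed route through Yau's gradient estimate is circular at this point, since that estimate already presupposes completeness of the metric carrying $\ric\ge 0$; and ``unbounded curvature concentration'' is not by itself a contradiction with geodesic completeness. This is precisely the difficulty the paper sidesteps by \emph{assuming} $\gbar$ complete (or, in the Corollary, $u$ bounded below), and the one Anderson resolves with collapsing theory. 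Absent a genuine substitute for that machinery, your argument proves at most Theorem~\ref{rigiditythm}, not Anderson's theorem in full.
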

The proof of this result in \cite{And00a} uses the full power of Cheeger-Fukaya-Gromov collapsing theory, with refinements specific 
to dimension three,  which makes it far from elementary. 

However, Case \cite{Cas10} (and subsequently Catino \cite{Cat12}) recently came back to the static vacuum setting and 
proved that all complete static vacuum  $n+1$-dimensional spacetimes $(\cN,\gamma)$ take the form of a product $(\bR \times M,-dt^2 + g)$,
where $(M,g)$ is a complete Ricci-flat $n$-dimensional Riemannian manifold. Their techniques are less 
sophisticated, relying on the Bochner formula, as well as comparison arguments \`a la Bakry-\'Emery.

In this paper, we will see how the same kind of techniques (and indeed without Bakry-\'Emery) can be adapted 
to provide a proof of rigidity in the stationary case, Theorem \ref{rigiditythm}, in dimension $n+1 = 4$ 
and under a suitable completeness assumption (instead of requiring the space-time to be complete, we assume a natural
metric on the orbit space is complete). The point is, even though our proof does not reach the full generality 
of Anderson's, it remains quite elementary. Note also that the stationary case is a bit more challenging than 
the static case, for the contribution of the non-trivial connection on the line bundle induces a contribution 
to the Ricci curvature which turns out to have a bad sign. This technicality is overcome by a conformal trick 
in dimension 3+1. In higher dimension, one can derive similar formulas for stationary initial data but they 
are harder to control. We discuss them at the end of the paper.

\section{The setting, in dimension 3+1}\label{dim3}

Definitions of stationary spacetimes existing in literature can vary depending on
the authors and the context,
although all of them assume the existence of a timelike Killing vector field
\footnote{Note however that this is no longer exact in the context of asymptotically flat
spacetimes with a black hole region, where the Killing vector field is usually asked to be 
timelike only in the asymptotic region, see \textit{e.g.} \cite{CL08}}.
We adopt the following definition in our work (compare with \cite{And00a} and 
\cite[Chap. XIV]{Cho09}).

\begin{definition}
A $(n+1)$-dimensional spacetime $(\cN,\gamma)$ is called \emph{stationary} if it has no closed
timelike curves and if there exists 
a timelike Killing vector field $\xi$ on $\cN$ whose orbits are complete. 
\end{definition} 
As mentioned in \cite{Har92}, the \emph{chronological} assumption,
corresponding to the non-existence of closed timelike curves, together with the orbit 
completeness prevent pathologies of the space of orbits.
\footnote{Without this assumption, an example of pathological spacetime is the 2-dimensional 
torus equipped with the Minkowski metric $-dx^2 + dy^2$. The orbits of the timelike Killing vector field 
$\xi = \sqrt{2}\; \partial _x + \partial _y$ are diffeomorphic to $\bR$, 
but the orbit space is not a smooth manifold.} 
In fact, a stationary spacetime $(\cN,\gamma)$ in the sense of the above definition 
can be seen as a principal $\bR$-bundle over the space of orbits $M$ which is a smooth manifold
diffeomorphic to any spacelike hypersurface of $\cN$ (see Geroch \cite{Ger67}). 

We will now see how one can characterize initial data corresponding to stationary vacuum spacetimes.
From the definition, a stationary spacetime is a $\bR$-principal bundle over a smooth base $M$:
\[
\pi : \cN \longrightarrow M.
\]
The fibers $\pi ^{-1}(\{p\})$, diffeomorphic to $\bR$, are the orbits of the timelike Killing 
vector field $\xi$, generator of the $\bR$ action. The orthogonal distribution determines a 
connection one-form $\theta$, $\xi$-invariant and with $\theta(\xi)=1$. The positive function $u$ 
defined by
\[
u^2 = - \gamma(\xi,\xi)
\]
is of course constant along the fibers, so we can think of it as a function on the base $M$. The spacetime metric then takes the form
\begin{equation}
\gamma = - u^2 \theta \otimes \theta + \pi^* g\;,
\end{equation}
where $g$ is the induced metric on the quotient space $M$.
We also denote by $\Omega := d\theta$ the corresponding curvature 2-form on $\cN$.
In dimension $n+1 = 4$, we also define the \emph{twist} 1-form as:
\[
\omega := - \frac12 u^3 \ast ^g \Omega\;,
\]
where $\ast ^g$ is the Hodge star operator associated with $g$.

We are interested here in the $3+1$-dimensional stationary spacetimes that moreover 
satisfy the Einstein vacuum equations, namely
\begin{equation}\label{vacuum}
 \ric ^{\gamma} = 0\;.
\end{equation}
The field equations obtained from \eqref{vacuum} on the data $(g,u,\omega)$ on $M$ 
then take the form 

\begin{equation}\label{Stat}
\left\{\begin{array}{rcl} \ric & = & u^{-1} \hess u + 2 u^{-4} (\omega \otimes \omega - |\omega|^2 g)\\
			\Delta u & = & -2u^{-3} |\omega|^2 \\
			\divg \omega & = & 3 \langle d \log u,\omega \rangle\\
			d \omega & = & 0 \;,
       \end{array} \right.
\end{equation}
where all the quantities are computed with respect to the metric $g$.
This system is obtained as the particular case $n=3$ of the computations performed in 
Section \ref{highdim}, see also \cite[pp. 455--456]{Cho09}.

This note offers a proof of the following statement:
\begin{theorem}\label{rigiditythm} 
 Let $(M^3,g,u,\omega)$ be a set of stationary vacuum initial data such that the metric
 $\gbar = u^2 g$ is complete. Then $u$ is a positive constant, $\omega = 0$ 
 and $(M,g)$ is flat.
\end{theorem}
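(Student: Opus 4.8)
The strategy is to pass to the conformal metric $\gbar = u^2 g$ mentioned in the statement and to show that, with respect to $\gbar$, some natural scalar quantity built from $u$ and $\omega$ is subharmonic (or satisfies a differential inequality forcing it to be constant), then invoke completeness of $\gbar$ to conclude. The key point flagged in the introduction is that in the stationary case the connection term $2u^{-4}(\omega\otimes\omega - |\omega|^2 g)$ contributes to $\ric$ with an unfavorable sign, and the conformal change to $\gbar$ is precisely the device that repairs this sign in dimension $n=3$.

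First I would rewrite the system \eqref{Stat} in terms of $\gbar = u^2 g$, computing $\ricbar$, $\Deltabar$, and the relevant norms under the conformal change; the special feature of $n=3$ is that the conformal weights conspire so that the bad term is absorbed. In particular I expect to obtain a clean expression for $\ricbar$ and for $\Deltabar$ applied to a suitable function of $u$ (for instance $\log u$, or a power of $u$), with the twist 1-form $\omega$ reinterpreted via $\divg \omega = 3\langle d\log u,\omega\rangle$ and $d\omega = 0$. The closedness and the divergence identity for $\omega$ together should let me control $\omega$ in terms of $u$: locally $\omega = u^3\, dv$ for some function $v$, which turns the remaining equations into a coupled elliptic system in $(u,v)$ on $(M,\gbar)$.

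Next I would derive a Bochner--Weitzenb\"ock identity, as advertised in the abstract, for the relevant gradient ($\nablabar \log u$, and the gradient of $v$), producing a differential inequality of the form $\Deltabar f \geq 0$ (or $\Deltabar f \geq c|\nablabar f|^2$ for an appropriate $f$) after using the field equations to dispose of the curvature terms with the correct sign. Completeness of $\gbar$ then forces, by a Liouville-type / maximum-principle argument (e.g. Yau's theorem for subharmonic functions, or a Cheng--Yau gradient estimate), that $f$ is constant; this gives $\omega = 0$ and $u$ constant. Once $u$ is constant, $\gbar$ and $g$ agree up to scale and the first equation of \eqref{Stat} reduces to $\ric = 0$ in dimension $3$, which means $g$ is flat.

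The main obstacle I anticipate is the second step: getting the signs right after the conformal change and extracting a differential inequality that is genuinely of a definite sign, rather than one cluttered with indefinite cross-terms in $\nablabar\log u$ and $\nablabar v$. This is where the dimensional restriction $n=3$ really enters, and where one must be careful that no completeness assumption on $g$ itself is needed --- only on $\gbar$ --- which is exactly why the argument does not trivially extend to higher dimensions (a point the authors defer to Section \ref{highdim}).
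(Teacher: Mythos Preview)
Your overall architecture matches the paper's: pass to $\gbar = u^2 g$, observe that $\ricbar \geq 0$, run a Bochner argument to get a differential inequality, and exploit completeness of $\gbar$. Two concrete points in your sketch, however, would not go through as written.

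First, the ansatz ``locally $\omega = u^3\,dv$'' is unjustified and in fact inconsistent: $d(u^3\,dv) = 3u^2\,du\wedge dv$ has no reason to vanish, so this contradicts $d\omega = 0$. Closedness gives only $\omega = d\psi$ locally, and the divergence identity does not produce the factor $u^3$. The paper does \emph{not} introduce a potential for $\omega$ at all. Instead it applies the Bochner formula directly to the closed $1$-form $\omega$, using that $\Delta_H\omega = d\,\divg_{\gbar}\omega$ since $d\omega = 0$, and combines this with the Bochner identity for $d\log u$. After rescaling $\eta := 2u^{-2}\omega$ and choosing a coupling constant $\lambda = \tfrac14$, the cross-terms you anticipate are absorbed by elementary Cauchy--Schwarz inequalities with parameters, yielding
\[
\Deltabar\Bigl(|d\log u|_{\gbar}^2 + \tfrac14|\eta|_{\gbar}^2\Bigr) \;\geq\; 4|d\log u|_{\gbar}^4 + \tfrac14|\eta|_{\gbar}^4 + 2\langle\eta,d\log u\rangle_{\gbar}^2 \;\geq\; 2\Bigl(|d\log u|_{\gbar}^2 + \tfrac14|\eta|_{\gbar}^2\Bigr)^2.
\]

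Second, the Liouville step you describe is too weak. An inequality $\Deltabar f \geq 0$ alone does not force $f$ constant on a complete manifold with $\ricbar\geq 0$ (linear functions on $\bR^n$ are harmonic), and ``Yau's theorem for subharmonic functions'' needs growth or integrability hypotheses you have not arranged. What makes the argument close is the \emph{quadratic} lower bound $\Deltabar f \geq 2f^2$ with $f \geq 0$. The paper then multiplies by a cutoff $\chi_R$ (with controlled $|d\chi_R|^2/\chi_R$ and $|\Deltabar\chi_R|$, available since $\ricbar\geq 0$), applies the maximum principle to $\chi_R f$, and obtains $\sup_{B_R} f \leq C R^{-2}$; letting $R\to\infty$ gives $f\equiv 0$, hence $u$ constant, $\omega = 0$, and $g$ Ricci-flat, so flat in dimension~$3$.
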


An immediate consequence is the following.

\begin{cor}
 Let $(M^3,g,u,\omega)$ be a set of stationary vacuum initial data such that $g$ is complete and $u$ is bounded from below by a positive constant.
Then $u$ is a positive constant, $\omega = 0$ 
 and $(M,g)$ is flat.
\end{cor}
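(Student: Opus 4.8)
The plan is to deduce the Corollary directly from Theorem \ref{rigiditythm}; everything reduces to verifying that the conformally rescaled metric $\gbar = u^2 g$ is complete, after which the rigidity conclusion ($u$ constant, $\omega = 0$, $(M,g)$ flat) follows verbatim. The single hypothesis I must exploit is the uniform positive lower bound $u \ge u_0 > 0$.

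First I would record the pointwise comparison of the two metrics. Since $u \ge u_0$, for every tangent vector $v$ one has $\gbar(v,v) = u^2 g(v,v) \ge u_0^2\, g(v,v)$, so the $\gbar$-length of any piecewise $C^1$ curve is at least $u_0$ times its $g$-length. Taking the infimum over curves joining two points $p,q \in M$ then yields
\begin{equation*}
 \dist_{\gbar}(p,q) \ \ge\ u_0\, \dist_g(p,q).
\end{equation*}
It is worth stressing that it is precisely the \emph{lower} (not upper) bound on $u$ that makes this inequality point in the useful direction: $\gbar$-distances dominate $g$-distances.

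Next I would upgrade this distance estimate to completeness by means of the Hopf--Rinow theorem. From the comparison above one obtains at once the inclusion of closed metric balls
\begin{equation*}
 \overline{B}_{\gbar}(p,R) \ \subseteq\ \overline{B}_g\!\left(p,\, R/u_0\right).
\end{equation*}
Since $g$ is complete, the right-hand side is compact by Hopf--Rinow. As $g$ and $\gbar$ are both smooth Riemannian metrics on $M$, they induce the same manifold topology, so $\overline{B}_{\gbar}(p,R)$ is a topologically closed subset of a compact set, hence itself compact. Thus every closed bounded set for $\gbar$ is compact, and the Heine--Borel direction of Hopf--Rinow guarantees that $(M,\gbar)$ is complete.

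With completeness of $\gbar$ in hand, Theorem \ref{rigiditythm} applies and concludes the proof. I do not expect a genuine obstacle here, since the substantive analytic work is already carried by Theorem \ref{rigiditythm}; the only point requiring a little care is the orientation of the conformal comparison, namely that controlling $\gbar$-balls by $g$-balls needs a lower bound on $u$, whereas an upper bound would be of no help for transferring completeness.
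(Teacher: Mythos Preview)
Your argument is correct and matches the paper's intent: the paper simply declares the Corollary an ``immediate consequence'' of Theorem~\ref{rigiditythm}, and the implicit step is exactly the one you supply, namely that the lower bound $u\ge u_0>0$ gives $\dist_{\gbar}\ge u_0\,\dist_g$, so completeness of $g$ forces completeness of $\gbar$. There is nothing to add.
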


\section{The proof}

In order to prove, Theorem \ref{rigiditythm}, we use the so-called ``harmonic representation'' of static and stationary spaces \cite{And00a,Cat12} , which amounts
to rewriting the set of equations (\ref{Stat}) with respect to the conformal metric $\gbar = u^2 g$:
\begin{equation}\label{Statbar}
\left\{\begin{array}{rcl} \ric_{\gbar} & = & 2 d\log u \otimes d \log u + 2 u^{-4} \omega \otimes \omega\\
			\Delta_{\gbar} \log u & = & - 2 u^{-4} |\omega|_{\gbar} ^2 \\
			\divg_{\gbar} \omega & = &  4 \langle d \log u,\omega \rangle_{\gbar}\\
			d \omega & = & 0 \;.
       \end{array} \right.
\end{equation}

We see in particular that the Ricci tensor of $\gbar$ is non-negative. From now on, every notation and operator will refer to the metric $\gbar$. We will use the Bochner formula as follows: for a one-form $\alpha$, one has
\begin{equation}\label{Bochner}
\Delta |\alpha|^2 = 
2 |\nabla \alpha|^2 + 2 \ric (\alpha,\alpha) 
+ 2 \langle \alpha , \Delta_H \alpha \rangle\;,
\end{equation}
where $\Delta_H$ is the Hodge-de Rham Laplacian on differential forms, 
$\Delta_H = -(d d^* + d^* d)$ (our convention makes every Laplacian a nonpositive operator).

\begin{lem}
The function $v = \log u$ and one-form $\eta = 2 u^{-2} \omega$ satisfy
\begin{equation}\label{lambda34}
\Delta \big(|dv|^2 + \frac14 |\eta|^2 \big) \geq 
4 |dv|^4 + \frac14 |\eta|^4  + 2 \langle \eta,dv \rangle ^2\;.
\end{equation}
\end{lem}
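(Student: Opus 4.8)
The plan is to apply the Bochner formula \eqref{Bochner} to the two one-forms $dv$ and $\eta$ separately, add the resulting identities, and then use the structure equations \eqref{Statbar} to convert the Ricci terms and the Hodge-Laplacian terms into manifestly nonnegative expressions. First I would compute $\Delta_H(dv) = d(d^*dv) = -d(\Delta v)$ (since $dv$ is exact, $d^*dv = -\Delta v$), and from the second equation of \eqref{Statbar}, $\Delta v = -\tfrac14|\eta|^2$ after rewriting $2u^{-4}|\omega|^2 = \tfrac14|\eta|^2$; hence $\Delta_H(dv) = \tfrac14 d|\eta|^2$. For $\eta$ I would use $d\eta = 2\,d(u^{-2})\wedge\omega + 2u^{-2}d\omega = -4\,dv\wedge(u^{-2}\omega) + 0 = -2\,dv\wedge\eta$ from the last equation of \eqref{Statbar}, and $d^*\eta = -\divg\eta$, which the third equation of \eqref{Statbar} controls: $\divg\omega = 4\langle dv,\omega\rangle$ together with $\eta = 2u^{-2}\omega$ gives $\divg\eta = -4\langle dv,\eta\rangle + 4\langle dv,\eta\rangle = 0$, so $\eta$ is divergence-free. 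Thus $\Delta_H\eta = -d d^*\eta - d^* d\eta = -d^*(-2\,dv\wedge\eta) = 2\,d^*(dv\wedge\eta)$.

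Next I would substitute into Bochner. For $dv$: $\Delta|dv|^2 = 2|\nabla dv|^2 + 2\ric(dv,dv) + \tfrac12\langle dv, d|\eta|^2\rangle$, and from the first equation of \eqref{Statbar}, $\ric(dv,dv) = 2|dv|^4 + 2u^{-4}\langle\omega,dv\rangle^2 = 2|dv|^4 + \tfrac12\langle\eta,dv\rangle^2$. For $\eta$: $\tfrac14\Delta|\eta|^2 = \tfrac12|\nabla\eta|^2 + \tfrac12\ric(\eta,\eta) + \tfrac12\langle\eta, d^*(dv\wedge\eta)\rangle$, and $\ric(\eta,\eta) = 2\langle dv,\eta\rangle^2 + 2u^{-4}|\omega|^2|\eta|^2 = 2\langle dv,\eta\rangle^2 + \tfrac14|\eta|^4$ (using $|\eta|^2 = 4u^{-4}|\omega|^2$). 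Adding the two gives
\[
\Delta\big(|dv|^2 + \tfrac14|\eta|^2\big) = 2|\nabla dv|^2 + \tfrac12|\nabla\eta|^2 + 4|dv|^4 + \tfrac14|\eta|^4 + 3\langle\eta,dv\rangle^2 + \big(\text{cross terms}\big),
\]
where the cross terms are $\tfrac12\langle dv, d|\eta|^2\rangle + \tfrac12\langle\eta, d^*(dv\wedge\eta)\rangle$. The remaining task is to show that the gradient terms plus these cross terms are at least $-\langle\eta,dv\rangle^2$, so that after cancellation one is left with $4|dv|^4 + \tfrac14|\eta|^4 + 2\langle\eta,dv\rangle^2$.

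The main obstacle is precisely controlling the cross terms: one must expand $d^*(dv\wedge\eta)$ using the product rule for the codifferential (it produces terms like $(\Delta v)\eta$, $\nabla_{\eta}dv$, $(\divg\eta)dv$, $\nabla_{dv}\eta$), feed back $\Delta v = -\tfrac14|\eta|^2$ and $\divg\eta = 0$, and recognize that the Hessian-type pieces combine with $|\nabla dv|^2$ and $|\nabla\eta|^2$ into a complete square, or at worst are absorbed by Cauchy–Schwarz against those gradient terms. In particular $\langle dv, d|\eta|^2\rangle = 2\langle dv, \nabla\eta\cdot\eta\rangle$ pairs naturally with the $\nabla_{dv}\eta$ contribution coming from $d^*(dv\wedge\eta)$, and the surviving quadratic $|\eta|^2\langle dv,\eta\rangle$-type term (from $(\Delta v)\eta = -\tfrac14|\eta|^2\eta$) is sign-definite in the right direction. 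Once the gradient terms are shown to dominate whatever negative contribution the cross terms carry, inequality \eqref{lambda34} follows by simply discarding the nonnegative leftover $|\nabla dv|^2$, $|\nabla\eta|^2$ (and any square) from the right-hand side. I would carry out the codifferential bookkeeping carefully in an orthonormal frame, as that is where sign errors are easiest to make.
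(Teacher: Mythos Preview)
Your overall strategy---apply the Bochner formula to both one-forms, feed in the structure equations \eqref{Statbar}, and absorb the cross terms with Cauchy--Schwarz---is exactly the paper's. But several of your intermediate computations are wrong, and the step you defer (``controlling the cross terms'') is in fact the entire content of the lemma.

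First, the arithmetic. Since $\eta = 2u^{-2}\omega$, one has $|\eta|^2 = 4u^{-4}|\omega|^2$, so the second line of \eqref{Statbar} gives $\Delta v = -\tfrac12|\eta|^2$, not $-\tfrac14|\eta|^2$. Next, with the convention $\Delta_H = -(dd^*+d^*d)$ one gets $\Delta_H(dv) = -dd^*(dv) = d(\Delta v)$, not $-d(\Delta v)$; hence $2\langle dv,\Delta_H dv\rangle = -\langle dv,d|\eta|^2\rangle$, a term with the \emph{bad} sign. Also, $\divg\eta = 2u^{-2}\divg\omega + \langle\nabla(2u^{-2}),\omega\rangle = 4\langle dv,\eta\rangle - 2\langle dv,\eta\rangle = 2\langle dv,\eta\rangle$, so $\eta$ is \emph{not} coclosed. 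Finally, $2u^{-4}\langle\omega,\eta\rangle^2 = \tfrac12|\eta|^4$, not $\tfrac14|\eta|^4$. These errors cascade through your displayed identity.

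Second, and more seriously, the absorption of the cross terms is not a formality. Once the identity is written correctly (the paper's equation \eqref{dveta1}), one is left with the indefinite terms $-\langle dv, d|\eta|^2\rangle$ and $8\lambda\,\hess v(\eta,\eta)$, to be controlled using only $2|\hess v|^2$ and $2\lambda|\nabla\eta|^2$. The paper does this via two Young inequalities with parameters $a,b$, and the choice $\lambda=\tfrac14$, $a=2$, $b=\tfrac12$ leaves \emph{exactly zero} slack in the coefficients of $|\hess v|^2$, $|\nabla\eta|^2$, and $|dv|^2|\eta|^2$. There is no room for error and no ``complete square'' waiting to be recognized; the inequality is tight at this level, so your hope that ``the gradient terms dominate whatever negative contribution the cross terms carry'' is precisely what has to be checked by hand, and it only barely works.

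A practical suggestion: the paper applies Bochner to $\omega$ rather than $\eta$. Since $d\omega=0$, this kills the $d^*d$ contribution cleanly ($\Delta_H\omega = d\divg\omega$), and one converts to $\eta$ only afterwards via the product rule for $\Delta(u^{-4}|\omega|^2)$. This avoids computing $d^*(dv\wedge\eta)$ altogether and keeps the bookkeeping manageable.
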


\begin{proof}
We apply first the Bochner formula \eqref{Bochner} to $\alpha = dv$:
$$
\Delta |dv|^2 =
2 |\hess v|^2 + 2 \ric (\nabla v,\nabla v) 
+ 2 \langle \nabla v,\nabla \Delta v \rangle\;.
$$
Taking (\ref{Statbar}) into account yields
\begin{equation}\label{dv}
\Delta |dv|^2 =
2 |\hess v|^2 + 4 |dv|^4 + 4 u^{-4} \langle \omega , dv \rangle ^2 
+ 16 u^{-4} |\omega|^2 |dv|^2 - 4 u^{-4} \langle dv, d|\omega|^2 \rangle\;.
\end{equation}
Comparing with the static case \cite{Cat12}, we need to tackle the last term. In view of this,
we use  (\ref{Statbar}) to find $\Delta_H \omega = - d d^* \omega = d \divg \omega$ and 
then apply the Bochner formula \eqref{Bochner} to $\alpha = \omega$. This yields
\begin{equation}\label{om}
\Delta |\omega|^2 = 
8 \langle d\langle \omega, dv \rangle,\omega \rangle 
+ 2 |\nabla \omega |^2 + 4 \langle \omega, dv \rangle ^2 + 4 u^{-4} |\omega |^4\;.
\end{equation}
Now, we compute
\[
\begin{split}
\langle d\langle \omega, dv \rangle,\omega \rangle & =
\hess v (\omega,\omega) + \nabla \omega (\omega ,dv ) \\
&= \hess v (\omega,\omega) + \frac12 \langle dv, d|\omega|^2 \rangle\;,
\end{split}
\]
where the last equality is due to the fact that $d\omega = 0$, so
that $\nabla \omega$ is a symmetric tensor.
Inserting this into (\ref{om}) yields
\begin{equation}\label{om1}
\Delta |\omega|^2 =
8 \hess v (\omega,\omega) + 4 \langle dv, d|\omega|^2 \rangle 
+ 2 |\nabla \omega |^2 + 4 \langle \omega, dv \rangle ^2 + 4 u^{-4} |\omega |^4\;.
\end{equation}
In order to rewrite (\ref{om1}) in terms of $\eta$, 
we first note the formula for the Laplacian of a product:
$$
\Delta |\eta|^2 = 
4 u^{-4} \Delta |\omega|^2 -32 u^{-4} \langle dv, d|\omega|^2 \rangle 
+ 4 |\omega|^2 (-4 u^{-4} \Delta v + 16 u^{-4} |dv|^2)\;,
$$
hence
\[ \begin{split}
\Delta |\eta|^2 &= \\
& 32 u^{-4} \hess v (\omega,\omega) 
  - 16 u^{-4} \langle dv, d|\omega|^2 \rangle 
  + 8 u^{-4} |\nabla \omega|^2 \\
& + 16 u^{-4} \langle \omega, dv \rangle ^2 
  + 48 u^{-8} |\omega|^4 
  + 64 u^{-4} |dv|^2 |\omega|^2\;.
\end{split}
\]
Eventually, we find that for any parameter $\lambda \geq 0$, 
\begin{equation}\label{dvom}
 \begin{split}
\Delta (|dv|^2 + \lambda |\eta|^2 ) &= \\
& 2 |\hess v|^2 
  + 4 |dv|^4 
  + 4 u^{-4} (1 + 4 \lambda)  \langle \omega, dv \rangle ^2 \\
& + 16 u^{-4} (1+4\lambda) |dv|^2 |\omega|^2 
  - 4 u^{-4} (1 + 4 \lambda) \langle dv, d|\omega|^2 \rangle \\
& + 48 \lambda u^{-8} |\omega|^4 
  + 8 \lambda u^{-4} |\nabla \omega|^2 
  + 32 \lambda u^{-4} \hess v (\omega,\omega)\;.
 \end{split}
\end{equation}
Let us now replace $\omega$ by $u^2 \eta /2$:
\begin{equation}\label{dveta}
 \begin{split}
\Delta (|dv|^2 + \lambda |\eta|^2 ) &= \\
& 2 |\hess v|^2 + 4 |dv|^4 
  + (1 + 4 \lambda)  \langle \eta, dv \rangle ^2 
  + 4 (1+4\lambda) |dv|^2 |\eta|^2 \\
& - u^{-4} (1 + 4 \lambda) \langle dv, d(u^4|\eta|^2) \rangle 
  + 3 \lambda |\eta|^4 + 2 \lambda u^{-4} |\nabla (u^2 \eta)|^2 \\ 
& + 8 \lambda \hess v (\eta,\eta)\;.
 \end{split}
\end{equation}
To go one step further, we expand the term 
$$
d(u^4 |\eta|^2) 
= 4 u^4 |\eta|^2 dv + u^4 d |\eta|^2
$$
and the term 
$$
u^{-4} |\nabla (u^2 \eta)|^2 =
|\nabla \eta|^2 
+ 2 \langle dv, d|\eta|^2 \rangle 
+ 4 |dv|^2 |\eta|^2\;,
$$
so as to obtain
\begin{equation}\label{dveta1}
\begin{split} 
 \Delta (|dv|^2 + \lambda |\eta|^2 ) &= \\
 & 2 |\hess v|^2 + 4 |dv|^4 
   + (1 + 4 \lambda) \langle \eta, dv \rangle ^2 
   - \langle dv, d|\eta|^2 \rangle \\
   & +  \lambda \left[3 |\eta|^4 
   + 2 |\nabla \eta|^2 
   + 8 |dv|^2 |\eta|^2 
   + 8 \hess v (\eta,\eta)\right]\;.
 \end{split}
\end{equation}

Using the elementary lower bounds
$$
\hess v (\eta,\eta) \geq -\frac{a}{2} |\hess v|^2 - \frac{1}{2a}|\eta|^4\;,
$$
and
$$
-\langle dv, d|\eta|^2 \rangle = 
- 2 \nabla \eta ( \nabla v,\eta) \geq
- \frac{b}{2} |\nabla \eta |^2 - \frac{1}{2b}|dv^2||\eta|^2
$$
valid for any positive parameters $a$ and $b$, we find the inequality
\[
 \begin{split}
  \Delta (|dv|^2 + \lambda |\eta|^2 ) \geq & \\
      & 2 \left(1 - 2 a \lambda\right) |\hess v|^2 
      + 4 |dv|^4 
      + \lambda \left(3 - \frac{4}{a}\right) |\eta|^4 \\
      + & \left(8 \lambda - \frac{1}{b}\right) |dv|^2 |\eta|^2 
      +  \left(2\lambda  - b \right) |\nabla \eta |^2 
      + (1+4\lambda) \langle \eta, dv \rangle ^2 \;.
 \end{split}
\]
The choices $\lambda = 1/4$, $a = 2$ and $b = 1/2$ reduce this into  (\ref{lambda34}).
\end{proof}

\bigskip

Now, for any point $p$ and scale $R$, owing to $\ric \geq 0$, one can construct a smooth cutoff function $\chi_R \, : \, M \to [0,1]$ 
which is identically $1$ on  $B_R(p)$, vanishes outside $B_{2R}(p)$ and satisfies
$$
\frac{|d\chi_R|^2}{\chi_R} \leq c \, R^{-2}, \qquad |\Delta \chi_R| \leq c \, R^{-2},
$$  
for some universal constant $c$ (cf. \cite{CC96}  or the scaled version of theorem 8.16 in \cite{Che}; we indeed use the square of the cutoff function constructed there). 
We then consider the function $H$ defined by
$$
H = \chi_R \left(|dv|^2 + \frac14 |\eta|^2\right)\;,
$$
We can compute $\Delta H$ through the identity
\[
\begin{split}
\Delta H & = 
(\Delta \chi_R)\big(|dv|^2 + \frac14 |\eta|^2\big) \\ 
 & + \chi_R \Delta \big(|dv|^2 + \frac14 |\eta|^2\big) 
+ 2 \big\langle d\chi_R, d \big(|dv|^2 + \frac14 |\eta|^2\big) \big\rangle \;.
\end{split}
\]
In view of \eqref{lambda34}, at some point where $\chi_R>0$, we get
\[ \begin{split}
\Delta H \geq & \\
 & (\Delta \chi_R) \chi_R ^{-1} H 
 + \chi_R \left[ 4 |dv|^4 + \frac14 |\eta|^4  
	      + 2 \langle \eta,dv \rangle ^2 \right] \\ 
 & + 2 \chi_R ^{-1} \langle d \chi_R, dH \rangle 
 - 2 |d\chi_R|^2 \chi_R^{-2} H,
 \end{split}
\]
where
\[
4 |dv|^4 + \frac14 |\eta|^4 +  2 \langle \eta,dv \rangle ^2  \geq 
  2 \left(|dv|^2 + \frac14 |\eta|^2 \right)^2,
\]
so that
$$
\Delta H \geq 
  (\Delta \chi_R) \chi_R ^{-1} H 
  + 2 \chi_R ^{-1} H^2 
  + 2 \chi_R ^{-1} \langle d \chi_R, dH \rangle 
  - 2 |d\chi_R|^2 \chi_R^{-2} H \;.
$$
The compactly supported function $H$ admits a maximum at some point $p_0$ in $M$. If $H(p_0) >0$ , we have at $p_0$:
$$
0 \geq 
  (\Delta \chi_R) H 
  + 2 H^2 
  - 2 |d\chi_R|^2 \chi_R^{-1} H
$$
and thus
$$
H \leq |d\chi_R|^2 \chi_R^{-1} - \frac12 \Delta \chi_R \leq 2c \, R^{-2}.
$$
In particular, for any $R>0$, we get
$$
\sup_{B_R(p)} \left( |dv|^2 + \frac14 |\eta|^2 \right) \leq 2c \, R^{-2}.
$$
Letting $R$ go to infinity, we find that $dv$ and $\eta$ vanish, 
so that $u$ is constant, $\omega=0$, $g$ is Ricci-flat and therefore flat.
 \qed


\begin{remark}
Instead of relying on \cite{CC96,Che}, we could have used the so-called Calabi trick \cite{Cal57}, which is maybe more elementary but somehow less transparent. 
\end{remark}

\section{Higher dimensional stationary data}\label{highdim}

In this section, we consider a principal $\bR$-bundle 
$
\pi : \cN \rightarrow M
$
over some smooth manifold $M^n$, $n \geq 3$, whose $\bR$ action is generated by the vector field $\xi$. 
We endow the total space $\cN$ with the Lorentzian metric 
$$
\gamma = - u^2 \theta ^2 + \pi ^* g,
$$
where $u$ is a positive function on $M$ and $\theta$ is 
a connection 1-form on $M$. So basically, $L_\xi \theta=0$, $\theta(\xi)=1$ and $u^2 = - \gamma(\xi,\xi)$.

We let $\Omega = d\theta$ be the curvature 2-form of the connection 1-form $\theta$,
and we denote by $\iota _. \Omega$ the mapping $ X \mapsto \iota _X \Omega$.
In particular, given an orthonormal frame $\{e_i\}_{i=1 \ldots n}$
and the dual coframe $\{e^i\}_{i=1 \ldots n}$, one has
\[
 \Omega = \sum_{1 \leq i < j \leq n} \Omega_{ij} e^i \wedge e^j 
        = \frac12 \sum_{i,j} \Omega_{ij} e^i \wedge e^j ,
\]
so that $|\iota _. \Omega|^2 = 2 |\Omega|^2$ and $(\divg \Omega)(X) = - \divg (\iota _X \Omega)$ for 
any vector field $X$.

The requirement that $(\cN,\gamma)$ is a solution of the vacuum Einstein's equations, 
i.e. Ricci-flat, yields the following
conditions on the data $(M,g,u,\Omega)$.

\begin{prop}
The Lorentzian manifold  $(\cN^{n+1},\gamma)$ determined by the data $(M,g,u,\Omega)$ as above
is Ricci-flat if and only if the following equations hold:
\begin{equation}\label{Statn}
\left\{\begin{array}{rcl} \ric & = & u^{-1} \hess u 
		- \frac12 u^{2} \langle\iota _.\Omega , \iota _.\Omega \rangle\\
			\Delta u & = & - \frac12 u^{3} |\Omega|^2 \\
			\divg \Omega & = & - 3\: \iota _{\nabla \log u}\Omega\\
			d \Omega & = & 0 \;.
       \end{array} \right.
\end{equation}

\end{prop}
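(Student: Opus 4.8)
The plan is to compute the Ricci tensor of $\gamma = -u^2\theta^2 + \pi^*g$ in a frame adapted to the submersion $\pi$, and then to obtain \eqref{Statn} by decomposing $\ric^\gamma$ into its vertical component $\ric^\gamma(e_0,e_0)$, its mixed component $\ric^\gamma(e_0,e_i)$, and its horizontal component $\ric^\gamma(e_i,e_j)$. Each of the first three equations of \eqref{Statn} will turn out to be exactly the vanishing of one of these blocks, so the claimed equivalence follows at once. Note that the last equation $d\Omega=0$ is automatic from $\Omega=d\theta$, so it carries no content in the equivalence and is listed only because it is used in the arguments above.

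First I would fix a local $g$-orthonormal coframe $\{\bar e^{\,i}\}_{i=1}^n$ on $M$ with dual frame $\{\bar e_i\}$ and lift it to $\cN$: set $e^0 = u\,\theta$ and $e^i = \pi^*\bar e^{\,i}$, so that $\gamma = -e^0\otimes e^0 + \sum_i e^i\otimes e^i$, the dual frame being $e_0 = u^{-1}\xi$ (a unit timelike vector field) together with the horizontal lifts $e_i$ of the $\bar e_i$. Using $\mathcal{L}_\xi\theta = 0$, $\theta(\xi)=1$ and $\xi u = 0$, the coframe satisfies
\[
de^0 = u^{-1}\,du\wedge e^0 + u\,\Omega, \qquad de^i = \pi^*(d\bar e^{\,i}),
\]
where $du$ is horizontal and $\Omega = \tfrac12\sum_{i,j}\Omega_{ij}\,e^i\wedge e^j$. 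Solving the first Cartan structure equations $de^a = -\omega^a{}_b\wedge e^b$ under the compatibility relations $\omega^i{}_j = -\omega^j{}_i$, $\omega^0{}_i = \omega^i{}_0$ appropriate to signature $(-,+,\dots,+)$ gives the Levi-Civita connection forms of $\gamma$ as the pulled-back connection forms of $g$ corrected by a term in $d\log u$ and a term in the $\Omega_{ij}$; pinning down those coefficients is the one delicate point of the bookkeeping.

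Feeding these into the second structure equations $\Theta^a{}_b = d\omega^a{}_b + \omega^a{}_c\wedge\omega^c{}_b$ and contracting produces the three blocks. The vertical one, $\ric^\gamma(e_0,e_0)=0$, unwinds to $u^{-1}\Delta u = -\tfrac12 u^2|\Omega|^2$, i.e.\ the second line of \eqref{Statn}. The mixed one, $\ric^\gamma(e_0,e_i)=0$, becomes $\divg\Omega = -3\,\iota_{\nabla\log u}\Omega$ once one uses $d\Omega = 0$ and the identity $(\divg\Omega)(X) = -\divg(\iota_X\Omega)$ recorded before the proposition. The horizontal one, $\ric^\gamma(e_i,e_j)=0$, is a Gauss-type equation which, using $\sum_k\Omega_{ik}\Omega_{jk} = \langle\iota_{e_i}\Omega,\iota_{e_j}\Omega\rangle$, reads $\ric^g = u^{-1}\hess u - \tfrac12 u^2\langle\iota_.\Omega,\iota_.\Omega\rangle$. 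Since every step is reversible, assuming \eqref{Statn} yields $\ric^\gamma\equiv 0$ as well, which proves the equivalence.

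The main obstacle is purely computational: this is the Kaluza--Klein reduction of vacuum gravity with a non-constant fibre length $u$, so the connection forms intertwine the curvature of $\theta$ with $\nabla\log u$, and these cross terms reappear in the curvature $2$-forms and must be tracked with care. The structural facts that keep the computation finite are that, for horizontal lifts, $[X^h,Y^h] = [X,Y]^h - \Omega(X,Y)\,\xi$ while $[\xi,X^h]=0$ and $\xi u = 0$; with these in hand the calculation is routine but error-prone, the sensitive spots being the coefficient $-\tfrac12$ in the lapse and horizontal equations and the coefficient $-3$ in the twist equation. Two sanity checks: in the static limit $\Omega=0$ the system must collapse to the static vacuum equations $\ric^g = u^{-1}\hess u$, $\Delta u = 0$; and for $n=3$ it must agree with \eqref{Stat} after the substitution $\omega = -\tfrac12 u^3\ast^g\Omega$, under which $|\omega|^2 = \tfrac14 u^6|\Omega|^2$ turns $-\tfrac12 u^3|\Omega|^2$ into $-2u^{-3}|\omega|^2$.
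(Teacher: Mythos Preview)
Your approach via Cartan's moving-frame method is sound and would yield the result, but it differs from the paper's proof, which uses the O'Neill formalism for semi-Riemannian submersions. The paper introduces the O'Neill tensors $A$ and $T$, computes them explicitly in a short lemma ($A_XY = -\tfrac12\Omega(X,Y)\xi$, $A_XU = -\tfrac12 u\,\iota_X\Omega$, $T_UU = \nabla\log u$, $T_UX = d\log u(X)\,U$), and then feeds these into O'Neill's curvature formulas to obtain the sectional curvatures $K^\gamma(X,U)$ and $K^\gamma(X,Y)$, from which the three Ricci blocks follow by tracing. Your structure-equation computation is equivalent in content---the bracket identities you record ($[X^h,Y^h] = [X,Y]^h - \Omega(X,Y)\xi$, $[\xi,X^h]=0$, $\xi u=0$) are precisely what determines $A$ and $T$---but organizes the bookkeeping differently: the O'Neill route isolates the geometric meaning of each correction term (obstruction to horizontal integrability vs.\ second fundamental form of the fibres) and lets one quote ready-made curvature formulas, whereas your frame approach is more mechanical and arguably easier to check line by line. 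One caveat: what you have written is a plan rather than an executed computation; the paper does carry the details through, and the coefficients you flag as sensitive ($-\tfrac12$ and $-3$) do emerge exactly as stated once either method is pushed to the end.
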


\proof
We use the formalism of semi-Riemannian submersions, cf. \cite[Chap. 9]{Besse} and \cite{ONe66}. 
In particular, we denote $W,X,Y,Z$ for horizontal vectors, whereas $U := u^{-1}\xi$ 
is a unit vertical vector (in the sense that $\gamma (U,U) = -1 $). 
We denote by $D$ the Levi-Civita connection with respect to the metric $\gamma$ and
by $\nabla$ the one for the quotient metric $g$. 
The brackets $\langle . , . \rangle$ will refer to the metric $\gamma$.
We introduce the tensors $A$ and $T$ through their values
on vertical and horizontal vector fields:
$$
T_X U = 0\ ,\ T_X Y = 0\ ,\ T_U U = \mathcal{H}D _U U\ ,\ T_U X =
\mathcal{V} D _U X
$$
and
$$
A _U X = 0\ ,\ A _U U = 0\ ,\ A_X U = \mathcal{H}D _X U\ ,\ A_X Y =
\mathcal{V} D _X Y\;,
$$
where the operators $\mathcal{H}$ and $\mathcal{V}$ refer to the horizontal and 
vertical projection respectively.

We first estimate the above non-vanishing terms.
\begin{lem}
 For all horizontal vector fields $X$ and $Y$, the formulas hold:
 \begin{eqnarray}
  A _X Y = - \frac12 \Omega (X,Y) \xi\ &, &\ A_X U = - \frac12  u \iota _X \Omega\ , 
  \\
  T_U U =  \nabla \log u\ &, &\ T_U X = d \log u(X) U\ .
  \end{eqnarray}
\end{lem}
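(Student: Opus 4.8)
The plan is to compute each of the four tensors straight from O'Neill's definitions, using only the structure of the principal $\bR$-bundle and the Killing property of $\xi$ (together with $\gamma(\xi,\xi)=-u^2$ and $\xi u=0$).

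First I would handle $A_XY$ for horizontal $X,Y$. Since $A$ is a tensor, it suffices to work with basic (horizontal, projectable) vector fields, and for those O'Neill's lemma gives $A_XY=\frac12\,\mathcal{V}[X,Y]$. The vertical part of the bracket is read off from the connection form: because $\theta(X)=\theta(Y)=0$, the identity $d\theta(X,Y)=X\theta(Y)-Y\theta(X)-\theta([X,Y])$ gives $\theta([X,Y])=-\Omega(X,Y)$, hence $\mathcal{V}[X,Y]=\theta([X,Y])\,\xi=-\Omega(X,Y)\,\xi$ and $A_XY=-\frac12\,\Omega(X,Y)\,\xi$. For the mixed term $A_XU$ I would use that $A_X$ is $\gamma$-skew-symmetric, so that $\langle A_XU,Y\rangle=-\langle U,A_XY\rangle=\frac12\,\Omega(X,Y)\,\langle U,\xi\rangle$ for every horizontal $Y$. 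Since $\langle U,\xi\rangle=u^{-1}\gamma(\xi,\xi)=-u$, since $A_XU$ is horizontal, and since $(\iota_X\Omega)(Y)=\Omega(X,Y)$, this yields $A_XU=-\frac12\,u\,\iota_X\Omega$, the horizontal $1$-form $\iota_X\Omega$ being identified with a vector through $g$.

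Next I would compute the $T$-terms. Writing $U=u^{-1}\xi$ and using $\xi u=0$ (so $U(u^{-1})=0$) gives $D_UU=u^{-2}D_\xi\xi$. Since $\xi$ is Killing, $\langle D_\xi\xi,E\rangle=-\langle D_E\xi,\xi\rangle=\frac12\,E(u^2)$ for every $E$; this vanishes when $E=\xi$ and equals $u\,du(E)$ when $E$ is horizontal, so $D_\xi\xi=u\,\nabla u$ is horizontal and $T_UU=\mathcal{H}D_UU=D_UU=u^{-1}\nabla u=\nabla\log u$. For $T_UX$ I would note that the $\gamma$-orthogonal projection onto the timelike vertical line is $E\mapsto-\langle E,U\rangle U$, whence $T_UX=\mathcal{V}D_UX=-\langle D_UX,U\rangle U$; differentiating the orthogonality $\langle X,U\rangle=0$ along $U$ and inserting the formula for $D_UU$ gives $\langle D_UX,U\rangle=-\langle X,D_UU\rangle=-d\log u(X)$, so that $T_UX=d\log u(X)\,U$.

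All of this is a sequence of short computations rather than anything substantial; the only points needing care are signs — the sign in $\theta([X,Y])=-\Omega(X,Y)$ on horizontal fields, the factor $-u$ in $\langle U,\xi\rangle$, and the $E\mapsto-\langle E,U\rangle U$ form of the vertical projection forced by the Lorentzian signature — together with the standard reduction to basic vector fields that licenses the identity $A_XY=\frac12\,\mathcal{V}[X,Y]$.
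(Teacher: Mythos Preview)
Your proof is correct and follows essentially the same route as the paper's: both compute $A_XY$ via $A_XY=\tfrac12\,\mathcal V[X,Y]$ and the identity $\theta([X,Y])=-\Omega(X,Y)$, obtain $A_XU$ from the skew-symmetry of $A_X$, and derive the $T$-terms from the Killing property of $\xi$. The only cosmetic differences are that the paper re-derives the identity $A_XY=\tfrac12\,\mathcal V[X,Y]$ from scratch (you cite it as O'Neill's lemma) and obtains $T_UX$ from the alternating property $\langle T_UU,X\rangle=-\langle U,T_UX\rangle$ rather than by your direct projection argument.
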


\begin{proof}
Let us first check that $A_X Y = \frac12 \mathcal{V} [X,Y]$. 

Indeed, for any horizontal $Z$, $A_Z Z$ is vertical and 
$\langle A_Z Z , U \rangle = - \langle Z, D_Z U \rangle = - \frac12 U (\langle Z,Z\rangle)$,
the last equality coming from the fact that $\pi_* [Z,U] = [\pi_* Z, \pi_* U] = 0$ so that 
$[Z,U] = D_Z U - D_U Z$ is vertical. We conclude that $A_Z Z = 0$ from the fact that $\langle Z,Z\rangle$
is constant along the (vertical) fibers, and we apply this to $Z = X+Y$, $Z=X$ and $Z=Y$ to get
that $A_X Y = - A_Y X$. The result now follows from the defining formula for $A_X Y$.

Next, we can write $A_X Y = \frac12 \mathcal{V} [X,Y] = \frac12 \theta ([X,Y]) \xi $,
and evaluate $\theta ([X,Y]) = - (\cL _X \theta)Y  = - \iota _X d\theta (Y)$ using the Cartan formula
and the property that $\theta$ vanishes on horizontal vectors. We have therefore obtained
$$
A _X Y = - \frac12 d\theta (X,Y) \xi\;.
$$
The formula for $A _X U$ now follows the fact that the tensor $A$ is alternate, 
in the sense that 
$\langle A_X U , Y \rangle = - \langle A_X Y , U \rangle$, and we obtain
$$
A_X U = - \frac12  u d\theta (X,.) =
- \frac12  u \iota _X d\theta\;.
$$

In order to establish the two remaining formulas concerning $T$, 
we claim that $T _{\xi} X = u^{-1} d u (X) \xi$ and $T_{\xi} \xi =  u \nabla u$.

Indeed, we compute $\langle T_{\xi} \xi , X \rangle = \langle D _{\xi} \xi ,X\rangle = 
\xi \langle \xi, X \rangle - \langle \xi , D _{\xi} X\rangle$.
In the meantime, the Lie bracket $[\xi,X]$ vanishes. Indeed, for the horizontal part, 
$ \pi _* [\xi , X] = [\pi _* \xi , \pi_* X] = 0$, whereas for the vertical part, 
$ \langle [\xi , X] , \xi \rangle =
\xi \langle X, \xi \rangle - (\cL _{\xi} \gamma)(\xi,X) - \langle X, [\xi,\xi] \rangle$,
which vanishes since $\xi$ is Killing for $\gamma$.

Hence, we can now write that $\langle T_{\xi} \xi , X \rangle =
- \langle \xi , D _X \xi \rangle $, which can be itself
expressed as $ - \frac12 X \langle \xi , \xi \rangle$. 
We eventually find the desired formula for $T_U U$ using the relation $U = u^{-1} \xi$.

The formula for $T_U X$ now follows from the property that $T$ is alternate in the sense
that $\langle T_U U, X \rangle = - \langle U, T_U X\rangle$.
\end{proof}
We can now use these formulas to compute the sectional curvatures, and then the Ricci 
curvature tensor of $\gamma$ evaluated on horizontal and vertical vectors. 
Let us first recall the O'Neill's formulas presented in \cite{ONe66,Besse}, in our setting 
where the fibers are one-dimensional:
\begin{equation}\label{ONeill}
 \begin{split}
  \langle R^{\gamma} (X,U)Y , U \rangle \ = &\ 
   \langle (D _X T)_U U , Y \rangle +
   \langle (D _U A) _X Y , U \rangle \\
   & - \langle T_U X , T_U Y \rangle +
   \langle A_X U , A_Y U \rangle ,
   \\
  \langle R^{\gamma} (X,Y)Z , U \rangle \ = &\
   \langle (D_Z A)_X Y , U \rangle +
   \langle A_X Y , T_U Z \rangle \\ 
   & - \langle A_Y Z , T_U X \rangle -
   \langle A_Z X , T_U Y \rangle ,
   \\
  \langle R^{\gamma} (X,Y)Z , W \rangle \ = &\
   \langle R (X,Y)Z , W \rangle -
   2 \langle A_X Y , A _Z W \rangle \\
   & + \langle A_Y Z , A _X W \rangle +
   \langle A_Z X , A _Y W \rangle
   .
 \end{split}
\end{equation}

Note that the convention used here (similarly to \cite{ONe66,Besse}) for the Riemann
curvature tensor is $R(X,Y) Z = \nabla _{[X,Y]} Z - \nabla _X \nabla _Y Z + \nabla _Y \nabla _X Z$,
as well as for $R^{\gamma}$ with respect to $D$.

We now rely on the formulas \eqref{ONeill} to derive the sectional curvatures $K^{\gamma}(X,U)$ and $K^{\gamma}(X,Y)$ for the metric $\gamma$, 
where $X,Y$ and $U$ satisfy $|X| = |X \wedge Y| =1$, $|U|^2 = -1 $:
\[
\begin{split}
K^{\gamma}(X,U) & = 
  \frac{\langle R^{\gamma}(X,U) X , U \rangle }{|X| ^2 |U| ^2} \\
  & = 
  - \left[\langle (D _X T)_U U,X\rangle - |T_U X|^2 + |A_X U|^2\right]
\end{split}
\]
and
$$K^{\gamma} (X,Y) = K(X,Y) - 3 |A_X Y|^2\;,
$$
where $K$ is the sectional curvature related to the horizontal metric $g$, 
and, again, the symbols $\langle \ ,\ \rangle$ and $|.|$
refer to the metric $\gamma$. 

Note that the formula for $K^{\gamma}(X,U)$ differs from the one for Riemannian submersions 
in \cite[p241]{Besse} only by the factor $-1$.
In our setting, the formula for $K^{\gamma} (X,Y)$ takes the expression
\begin{equation}\label{KXY}
K^{\gamma}(X,Y) = K(X,Y) + \frac34  u^2 \Omega (X,Y)^2\;.
\end{equation}
Concerning $K^{\gamma}(X,U)$, we need to evaluate 
$$
\langle (D _X T)_U U,X\rangle =
X \langle T_U U, X\rangle - \langle T_{D _X U} U,X \rangle - 
\langle T_U (D _X U), X \rangle - \langle T_U U , D _X X \rangle\;.
$$
But $D _X U$ is horizontal (since $|U|^2$ is constant), hence 
$T_{D _X U} U = 0$.
For the same reason, $T_U(D _X U)$
is vertical, therefore $- \langle T_U (D _X U), X \rangle$ vanishes.
Then, $X \langle T_U U, X\rangle =  X.X. \log u = 
 \hess^{\gamma} \log u (X,X) + d \log u (D _X X) $, 
so that
$X \langle T_U U, X\rangle - \langle T_U U , D _X X \rangle =
 \hess^{\gamma} \log u (X,X)$.
All what remains now is:
\begin{equation}\label{KXU}
- K^{\gamma}(X,U) =
  \hess^{\gamma} \log u (X,X) 
+  |d\log u (X)|^2 + \frac14 u^2 |\iota _X \Omega|^2\;.
\end{equation}
We are now able to compute the component of the Ricci tensor of $\gamma$ from \eqref{KXY} 
and \eqref{KXU}.
Indeed, if $\{e_i\}_{i=1 \cdots n}$ is an orthonormal basis of $(M,g)$, one now has 
\[
 \ric^{\gamma} (U,U) = \sum_{i=1}^n \langle R^{\gamma} (e_i ,U) e_i , U \rangle ,
\]
hence
\begin{equation}\label{RicUU}
 \ric^{\gamma} (U,U) = -
\sum_{i=1}^n K^{\gamma}(e_i,U) = 
 \Delta^{\gamma} \log u + |d\log u |^2 + \frac14  u^2 |\iota _. \Omega|^2\;,
\end{equation}
where it is recalled that $\iota _. \Omega$ is the contraction mapping $ X \mapsto \iota _X \Omega $.
Since $u$ and therefore $\log u$ does not change along the flow of $\xi$, we have also that 
$\hess ^{\gamma} \log u (U,U) = 0$ and thus $\Delta ^{\gamma} \log u = \Delta \log u$, 
where it is recalled that $\Delta$ is the Laplacian for the metric $g$.
With the same observation and taking care of the signature of the metric $\gamma$, 
one obtains on horizontal vectors:
\[
\begin{split}
\ric^{\gamma} (e_i,e_i) &=
  - \langle R^{\gamma} (U,e_i) U , e_i \rangle 
  + \sum_{j \neq i} \langle R^{\gamma} (e_j,e_i) e_j , e_i \rangle \\
  &= K^{\gamma}(e_i,U) + \sum_{j\neq i} K^{\gamma} (e_i,e_j) \\
  &= \ric (e_i,e_i) - u^{-1} \hess u (e_i,e_i) 
  + \frac{1}{2} u^2 |\iota _{e_i} \Omega|^2\;,
\end{split}
\]
where $\hess$ is the Hessian with respect to the metric $g$. 
Thus, more generally, for $X,Y$ horizontal:
\begin{equation}\label{RicXY}
\ric^{\gamma} (X,Y) =  \ric (X,Y) - u^{-1} \hess u (X,Y) 
  + \frac{1}{2} u^2 \langle\iota _{X} \Omega,\iota _{Y} \Omega\rangle\;.
\end{equation}
We finally evaluate
\begin{equation}\label{RicUi}
\begin{split}
\ric^{\gamma} (e_i,U) &=  
   - \frac{1}{2} u 
    \left\{\divg (\iota _{e_i} \Omega) - 3 \iota _{\nabla \log u} \Omega (e_i)\right\}\\
    &= \frac{1}{2} u 
    \left\{(\divg \Omega)(e_i) + 3 \iota _{\nabla \log u} \Omega (e_i)\right\}.
\end{split}    
\end{equation}

Replacing $\ric ^{\gamma} = 0$ in \eqref{RicUU}, \eqref{RicXY} and \eqref{RicUi} 
gives the desired formulas. \qed

\begin{remark}
 As already mentioned in Section \ref{dim3}, this general result yields the field equations obtained 
 in dimension $3+1$ for the initial data $(M,g,u,\omega)$, providing that $\omega$ is chosen to be
 \[
  \omega = -\frac12 u^3 \ast ^g \Omega .
 \]
\end{remark}

\begin{remark} 
 Thanks to these formulas, one would expect that an analysis similar to the one in Section 
 \ref{rigiditythm} can be carried out.
 More precisely, when considering as above the metric $\overline{g} = u^2 g$ for $n \geq 4$, 
 and defining the function $ w = (n-3) \log u$, we get the identity
 \[
  \overline{\ric} + \overline{\hess} + \frac{n-5}{(n-3)^2} dw \otimes dw 
  = \frac{u^4}{2}\left(|\Omega|^2 _{\overline{g}} \overline{g} 
  - \langle \iota _. \Omega, \iota _. \Omega \rangle _{\overline{g}} ^2\right)\;.
 \]
A simple computation shows that the right-hand side of the above equation is non-negative.
On the other hand, for $n=4$, the left-hand side is the $1$-Bakry-\'Emery-Ricci tensor of $w$, 
denoted by $\overline{Ric}^1 _w$, whereas, for $n=5$, it is the $\infty$-Bakry-\'Emery-Ricci
tensor of $w$, $\overline{Ric}^{\infty} _w$. For larger $n$, there is no obvious way 
through any conformal change to rewrite the above identity in the form of a Bakry-\'Emery-Ricci tensor
with a non-negative right-hand side.
One might want at this point to use these facts 
(and the positive results of \cite{Cas10,Cat12}) to perform an analysis
as above for stationary vacuum solutions of dimension $n+1$ with $n = 4$ or $n=5$. 
But the generalisation of our proof of Theorem \ref{rigiditythm} 
to higher dimensions would require some more information on the curvature 2-form $\Omega$
or on its Hodge dual, as the generalised Weitzenb\"ock formulas 
involve the full Riemann tensor,
see \cite[Section 4]{Bou81} and \cite[Chap. 3]{Li}.
\end{remark}

\thanks{The authors wish to thank Piotr T. Chru\'sciel and Michael Eichmair for useful comments. 
J.C. acknowledges the IH\'ES (Bures-sur-Yvette) and the FIM, ETH (Z\"urich) for warm hospitality 
during part of this work.}

\bibliographystyle{amsplain}
\bibliography{../bibstatic}

\end{document}